\documentclass[twocolumn,amsthm]{autart}

%% Font and language
\usepackage{lmodern}
\usepackage[utf8]{inputenc}

%% Math stuff
\usepackage{mathtools}
  \mathtoolsset{showonlyrefs, centercolon, mathic}
\usepackage{amssymb}
\usepackage{dsfont}
\usepackage{nicefrac}
\usepackage{siunitx}

%% Plots, figures and Tikz
\usepackage[usenames, dvipsnames, svgnames]{xcolor}
\usepackage{tikz}
  \usetikzlibrary{arrows}
    \tikzset{>=stealth}
  \usetikzlibrary{backgrounds}
  \usetikzlibrary{external}
    \tikzexternalize
    \tikzset{external/only named}
    \tikzset{external/mode=list and make}
    \tikzset{external/prefix=fig/}
    \tikzset{external/system call=%
        {lualatex \tikzexternalcheckshellescape -synctex=1 %
        -interaction=batchmode -jobname "\image" "\texsource"}%
    }
    \pgfkeys{/pgf/images/include external/.code={%
        \IfFileExists{#1.pdf}{%
          \includegraphics{{#1}.pdf}%
        }{%
          \includegraphics[draft]{{#1}.pdf}%
        }%
      }
    }
  \usetikzlibrary{plotmarks}
  \usetikzlibrary{positioning}

\usepackage{pgfplots}
  \usepgfplotslibrary{groupplots}
  \usepgfplotslibrary{statistics}
  \pgfplotsset{compat=newest}
  \pgfplotsset{
    log ticks with fixed point,
    scaled ticks=false,
    clip marker paths=true,
    xlabel near ticks,
    ylabel near ticks,
    every axis title shift=1pt,
  }

\usepackage{booktabs,colortbl}
\usepackage{pgfplotstable}
  \pgfplotstableset{
    every even row/.style={before row={\rowcolor[gray]{0.9}}},
    every head row/.style={before row=\toprule,after row=\midrule},
    every last row/.style={after row=\bottomrule},
  }

%% Miscellaneous
\usepackage{enumerate}
\usepackage{dblfloatfix}
\usepackage{paralist}
\usepackage[multidot]{grffile}
\usepackage{needspace}
\usepackage{xifthen}
\usepackage{todonotes}

%% Bibliography stuff
\usepackage{natbib}

%% Hyperlinks
\usepackage[colorlinks=true, linkcolor=black, citecolor=black, 
            urlcolor=blue]{hyperref}

%% Configure penalties for inline equation breaks
\relpenalty=10000
\binoppenalty=10000

%% Attempt to make floats less fickle
\setcounter{topnumber}{2}
\setcounter{bottomnumber}{2}
\setcounter{totalnumber}{4}

%% Math helper commands
%%%%%%%% Some helper math commands %%%%%%%%

%%% Superscripts %%%
\newcommand{\trans}{^{\mathsf{T}}} %Transpose
\newcommand{\mtrans}{^{\mathsf{-T}}} %Transpose
\newcommand{\inv}{^{-1}} %Inverse
 %Pseudoinverse
 %Optimum
 %Local optimum

%%% Arrows %%%
 %Convergence from above
 %Weak convergence

%%% Special symbols %%%
\newcommand{\ident}{\ensuremath{\mathrm{I}}} %Identity matrix
\newcommand{\dd}{\ensuremath{\mathrm{d}}} %Differential "d"
\newcommand{\reals}{\ensuremath{\mathds{R}}} %Set of real numbers
 %(strictly) positive reals
 %non negative reals
 %Extended real numbers
 %Set of natural numbers
 %Set of integers
 %Borel sigma algebra
 %Rank of a matrix
\DeclareMathOperator*{\maximize}{maximize} %Maximize
\DeclareMathOperator*{\subjto}{subject\;to} %subject to
 %almost everywhere o
 %Domain of a function
 %Closure of a set
 %Trace of a matrix
 %Divergence of a vector field
 %Support of a measure

%%% Special functions %%%
%index

%Borel sigma algebra
 %Identity matrix

%derivatives

\newcommand{\pderiv}[2]{\ensuremath{\frac{\partial #1}{\partial #2}}}

%parenthesize
\newcommand{\paren}[1]{\ensuremath{\left(#1\right)}}

%Function call parenthesis
\newcommand{\fparen}[1]{\ensuremath{\!\left(#1\right)}}

%bracketize

%bracetize (curly bracket)

%Set builder notation

%Norm

%Non-mathing delimiter norm

%Inner product

%Non-mathing delimiter inner product

%Absolute value

%Non-mathing delimiter absolute value

%Probability with big p
\newcommand{\Prob}[2][]{%
    \ensuremath{P\ifthenelse{\isempty{#1}}{}{_{#1}\!}\fparen{#2}}%
}

%Probability with lower case p
\newcommand{\pr}[2][]{%
  \ensuremath{p\ifthenelse{\isempty{#1}}{\!}{_{#1}\negmedspace}\paren{#2}}%
}

%Supremum norm
\newcommand{\snorm}[1]{%
  \left\lvert\kern-0.3ex\left\lvert\kern-0.3ex\left\lvert
  #1
  \right\rvert\kern-0.3ex\right\rvert\kern-0.3ex\right\rvert
}

%Non-matching delimiter supremum norm
\newcommand{\snormnm}[1]{%
  \lvert\kern-0.3ex\lvert\kern-0.3ex\lvert
  #1
  \rvert\kern-0.3ex\rvert\kern-0.3ex\rvert
}

%Expected value
\newcommand{\E}[2][]{\ensuremath{\mathrm{E}_{#1}\negmedspace\left[#2\right]}}

%Normal Distribution
\newcommand{\normald}[1]{\ensuremath{\mathcal{N}\!\paren{#1}}}

%Conditional on
\newcommand{\cond}{\ensuremath{\,\middle\vert\,}}

%Null space

%% My tikz picture infrastructure
\newcommand{\inputtikzpicture}[1]{%
  %\tikzsetnextfilename{#1.tikz}%
  %\tikzpicturedependsonfile{fig/#1.tikz.tex}%
  \includegraphics{#1.tikz.pdf}%
}

%%%%%%%%%%%%%%%%%%%%%%%%%% Math variables
\newcommand{\popt}{p^*}
\newcommand{\qopt}{q^*}
\newcommand{\zopt}{z^*}
\newcommand{\lmopt}{\lambda^*}
\newcommand{\pspace}{{\mathbb{P}}}
\newcommand{\qspace}{{\mathbb{Q}}}
\newcommand{\zspace}{{\mathbb{Z}}}
\newcommand{\pnei}{{\mathbb{P}^*}}
\newcommand{\qnei}{{\mathbb{Q}^*}}

%%%%%%%%%%%%%%%%%%%%%%%%%%%%%%%%%%%%%%%%%%%%%%%%%%%%%%%%%%%%%%%%%%%%%%%%%%%%%%%
\begin{document}

\begin{frontmatter}
  \title{%
    Uncertainty estimation in equality-constrained MAP and maximum likelihood
    estimation with applications to system identification and state estimation
  }

  \author{Dimas Abreu Archanjo Dutra}\ead{dimasad@ufmg.br}

  \address{%
    Departamento de Engenharia Mecânica ---
    Universidade Federal de Minas Gerais ---
    Belo Horizonte, MG, Brasil
  }
  
  \begin{keyword}
    System model validation; measures of model fit; estimation theory;
    statistical analysis; system identification.
  \end{keyword}
  
  % Abstract should have no more than 200 words.
  \begin{abstract}
    In unconstrained maximum \emph{a posteriori} (MAP) and maximum likelihood
    estimation, the inverse of minus the merit-function Hessian
    matrix is an approximation of the estimate covariance matrix. 
    In the Bayesian context of MAP estimation, it is the covariance of a normal 
    approximation of the posterior around the mode; while in maximum likelihood
    estimation, it an approximation of the inverse Fisher information 
    matrix, to which the covariance of efficient estimators converge.
    These measures are routinely used in system identification to evaluate the
    estimate uncertainties
    and diagnose problems such as overparametrization, improper excitation
    and unidentifiability.
    A wide variety of estimation problems in systems and control, however, can
    be formulated as equality-constrained
    optimizations with additional decision variables to
    exploit parallelism in computer hardware, simplify implementation and
    increase the convergence basin and efficiency of the nonlinear program
    solver.
    The introduction of the extra variables, however, dissociates the 
    inverse Hessian from the covariance matrix.
    Instead, submatrices of the inverse Hessian of the constrained-problem's 
    Lagrangian must be used. 
    In this paper, we derive these relationships, showing how the
    estimates' covariance can be estimated directly from the augmented problem.
    Application examples are shown in system identification with the 
    output-error method and joint state-path and parameter estimation.
  \end{abstract}
\end{frontmatter}

\section{Introduction}
\label{sec:introduction}

When formulating a nonlinear optimization problem, function composition in
the merit function can be replaced by the introduction of
additional decision variables and equality constraints.
This general technique can simplify the implementation,
transform a dense optimization problem into a sparse one, and help overcome
local optima and convergence issues
\citep{dutra2019cbo, ribeiro2019sns}.
In the output-error method for system identification, this principle underlies
the use of collocation and multiple shooting in continuous-time 
[\citealp{bock1980nti, bock1982rap};
\citealp{betts2003lsp}; \citealp{williams2005ope};
\citealp[Chap.~5]{betts2010pmo}; \citealp{boisvert2012cod}; 
\citealp{dutra2019cbo}] and discrete-time systems \citep{ribeiro2017smp}.
The approach is pretty general, however, and has been applied to other state
\citep{lopez2012mhe, dutra2014map} and joint state and parameter estimators 
\citep{dutra2012jmaps, dutra2017jmp}.

In unconstrained maximum \emph{a posteriori} (MAP) and maximum likelihood (ML)
estimation, the inverse of the merit-function Hessian matrix is an 
approximation of the estimates' covariance matrix.
This is used extensively  to obtain estimate correlations, uncertainties, 
and significance in system identification \citep[Chap.~16]{ljung1999si}
 and its applications to aircraft
[\citealp{maine1981tpe}; \citealp[Sec.~6.3.3]{klein2006asi}; 
\citealp[Sec.~11.2]{jategaonkar2015fvs}].
The covariance matrix, merit-function Hessian, and measures derived from them
can, in turn, be used to diagnose overparametrization, unidentifiability,
improper excitation, and inadequate model postulates 
\citep{mehra1974ois, soderstrom1975coa, stoica1982ons}.
From the standpoint of the optimization problem, the Hessian quantifies how
deep or shallow the basin around the optimum is. 
A small eigenvalue means that a certain parameter or parameter combination
changes the posterior density or the likelihood very little and, consequently,
is not well estimated from the test data.

There are also formal interpretations to this intuitive concept that justify
the use of the Hessian matrix to quantify the estimates' uncertainty.
In classical statistics, the parameters are considered
unknown deterministic quantities and their data-dependent estimates are random
variables.
Under some regularity conditions, maximum likelihood estimators are 
asymptotically consistent, unbiased, efficient, and normally distributed
[see, e.g., \citealp[Sec.~33.3]{cramer1946mms}; \citealp{wald1949ncm};
\citealp[Sec.~12.3]{wilks1962ms}].
The regularity conditions are associated with a well-posed ML estimator for
sufficient data, such as having unique global maxima for all large enough
datasets.
The covariance of efficient estimators is given by the inverse of the Fisher
information matrix.
If the estimator is consistent, minus the Hessian of the log-likelihood
at the optima converges to the Fisher information by the law of large numbers.

These conditions apply for a wide class of prediction error methods in system
identification \citep[Chaps.~8--9]{kashyap1970mli, caines1976pee, ljung1999si} 
and are assumed to hold for aircraft system 
identification methods such as the output error method and the filter error
method [\citealp[Sec.~III.B-2]{murphy1985aem}; 
\citealp[Appx.~D]{jategaonkar2015fvs}].
This makes the merit-function Hessian an asymptotic measure of the 
estimate uncertainties for a wide range of problems in systems and control.
Its use in the practice of aircraft system identification is supported by
experimental evidence and simulated analyses \citep[see, e.g.,][]{maine1981tpe}.

Under a Bayesian framework of MAP estimation,
the merit-function is the log-posterior, the logarithm of the posterior 
probability density.
Under some regularity conditions, the Bayesian Central Limit Theorem
\citep[Sec.~3.2.1]{ando2010bms}, also
known as the Bernstein--von Mises Theorem \citep[Sec.~10.2]{vaart1998as}
states that the posterior converges asymptotically to a normal distribution
centered around the mode, with covariance given by minus the inverse 
merit-function Hessian.
This is a Bayesian analog of the limiting behavior of the ML estimator which,
as data grows, also converges to the same value as the posterior mode.

The introduction of aditional variables and equality constraints 
in approaches like collocation and multiple shooting,
however, dissociates the merit-function Hessian from
the Fisher information matrix.
In this work, we prove that the inverse Hessian of the equivalent
unaugmented problem is \emph{equal} to a submatrix of the inverse
Hessian of the Lagrangian of the augmented, equality-constrained, problem.
We also show how to approximate the dependent variables' covariance, based
on a linear approximation of the constraints, from the Lagrangian Hessian.
This allows the computation of the estimate uncertainties directly from the
functions and results used to solve the augmented nonlinear program (NLP).
We note that, although our motivation is system identification and state 
estimation, the
results are general and apply to any equality-constrained maximum likelihood
or maximum \emph{a posteriori} estimator, even beyond the field of systems and
automatic control.

A computationally similar, although conceptually different
solution to the same problem is given by \citet[Sec.~2.6]{ramsay2007ped}, 
\citet[Sec.~4]{pirnay2012osb} and \citet[Sec.~2.3]{lopez2012mhe}.
As explained in detail by \citet[Sec.~7.7]{bard1974npe}, it consists
of evaluating the first-order change in the estimates generated by a 
stochastic perturbation to the data.
The covariance of the estimate variation then gives, approximatelly, the
estimate uncertainties, especially when that variation is small and hence the
linear approximation is reasonable.
In nonlinear least-square problems both approaches give the same result, 
although in non-Gaussian ML problems like those associated with robust 
estimation they may differ.
It should be noted, however, that this approach of considering stochastic
perturbations to the data does not rely on the assumptions of consistency and
efficiency of the estimator, so it is applicable to other optimization-based
methods besides MAP and maximum likelihood \citep[Sec.~7.5]{bard1974npe}.

The remainder of this article is organized as follows.
In Sec.~\ref{sec:results} we define the reduced and augmented estimation
problems and present the theoretical result.
In Sec.~\ref{sec:app} we present applications of the result to the output-error
method of system identification, joint MAP state-path and parameter estimation
in stochastic differential equations (SDEs), and Markov chain Monte Carlo
(MCMC) sampling of state-paths and parameters in SDEs.
The paper concludes with the final remarks in Sec.~\ref{sec:conclusions}.

\section{Theoretical result}
\label{sec:results}

\subsection{Notation}

For a scalar function $f\colon\reals^n\times\reals^m\to\reals$ of two 
vector-valued
arguments, $f(x,y)$, we denote its gradient with respect to $x$ by 
$\nabla_x f(a,b)\in\reals^{n}$ and Hessian matrix with respect to $x$ and $y$ by
$\nabla^2_{xy} f(a,b)\in\reals^{n\times m}$:
\begin{align}
  [\nabla_x f(a,b)]_{i} &= 
  \left.\pderiv{f}{x_i}\right|_{\substack{x=a\\y=b}}, \\
  [\nabla_{xy}^2 f(a,b)]_{ij} &= 
  \left.\pderiv{^2 f}{x_i\partial y_j}\right|_{\substack{x=a\\y=b}}.
\end{align}
Similarly, for vector-valued functions 
$g\colon\reals^n\times\reals^m\to\reals^p$
of two vector-valued variables $g(x,y)$, we define the Jacobian matrix with 
respect to $x$ as $\nabla_x g(a,b)\in\reals^{p\times n}$:
\begin{align}
  [\nabla_x g(a,b)]_{ij} = 
  \left.\pderiv{g_i}{x_j}\right|_{\substack{x=a\\y=b}}.
\end{align}
For functions of only one argument the subscript to the nabla symbol may be 
omitted.
In addition, $\nabla_{xx}^2$ will be shortened to $\nabla_x^2$.

\subsection{Augmented and reduced problem definitions}

We now begin with the definition of the estimation problems under consideration.
Our starting point is the augmented problem, written in terms of the independent
variables $p\in\pspace\subseteq\reals^n$ and the dependent variables 
$q\in\qspace\subseteq\reals^m$:
\begin{equation}
  \label{eq:aug_problem}
  \begin{aligned}
    \operatorname*{maximize}_{p\in\pspace, q\in\qspace} &&\;&
    \ell(p,q)\\
    \operatorname{subject\,to}\;&&&
    g(p,q)=0,
  \end{aligned}
\end{equation}
where $\ell\colon\pspace\times\qspace\to\reals$ is the merit function and the
equality constraints are encoded by $g\colon\pspace\times\qspace\to\reals^m$.
To simplify notation, we let $z\in\zspace:=\pspace\times\qspace$ 
denote the full decision variable vector of the augmented problem, i.e., 
$z = [p\trans\; q\trans]\trans$ and let $\ell$ and $g$ be written in
terms of either $z$ or the $(p,q)$ pair.

We consider only well-posed problems which can be rewritten as unconstrained 
optimizations only in terms of 
the independent variables $p$, at least around the optimum.
Formally, this translates to the assumptions below.

\begin{assum}
  \label{th:assum}
  \hfill
  \begin{enumerate}[a.]
  \item \label{it:existence_sol}%
    There exists a unique solution $\popt,\qopt$ 
    to the problem defined in Eq.~\eqref{eq:aug_problem}, and it lies in 
    the interior of the search space, $\popt\in\pspace^\circ$, 
    $\qopt\in\qspace^\circ$.
  \item \label{it:differentiability}%
    The merit $\ell$ and constraint $g$ are twice continuously differentiable
    (of class $C^2$) with respect to all arguments at a neighbourhood
    $\pnei\times\qnei\subseteq\pspace\times\qspace$ of the optimum
    $\zopt= [\popt{}\trans\;\qopt{}\trans]\trans$.
  \item \label{it:jac_inv}%
    The Jacobian matrix $\nabla_q g(\zopt)$ of $g$ with respect to 
    the dependent variables $q$, evaluated at the optimum, is invertible.
  \end{enumerate}
\end{assum}

Assum.~\ref{th:assum} implies that the implicit function theorem 
can be used, consequently there
exists a unique explicit solution to the constraints around the optimum.

\begin{cor}[{of the implicit and inverse function theorems, 
    \citealp[Thms.~3.3.1 and 3.3.2]{krantz2003ift}}]
  \label{th:explicit_sol}
  If Assum.~\ref{th:assum} holds then, for a neighbourhood 
  $\pnei\subseteq\pspace$ of $\popt$,
  there exists a unique twice continuously differentiable (of class $C^2$)
  function $w\colon\pnei\to\reals^{n+m}$ such that $w(\popt) = \zopt$ and
  \begin{equation}
    \label{eq:explicit_sol}
    g\big(w(p)\big) = 0 \qquad \forall p \in \pnei.
  \end{equation}
  Furthermore, the Jacobian matrix of $w$ at the optimum is
  \begin{equation}
    \label{eq:w_jac}
    \nabla w(\popt) = 
    \begin{bmatrix}
      \ident_n \\ 
      -\nabla_q g \inv \nabla_p g
    \end{bmatrix}_{z=\zopt}.
  \end{equation}
\end{cor}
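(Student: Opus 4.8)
The plan is to invoke the implicit function theorem directly, since Assum.~\ref{th:assum} has been arranged precisely to supply its hypotheses, and then to package the resulting implicit solution into the full map $w$. First I would verify the three conditions of the theorem as stated by Krantz and Parks at the point $\zopt=(\popt,\qopt)$: by Assum.~\ref{th:assum}\ref{it:existence_sol} this point is a feasible interior solution, so $g(\zopt)=0$; by Assum.~\ref{th:assum}\ref{it:differentiability} the map $g$ is $C^2$ on a neighbourhood of $\zopt$; and by Assum.~\ref{th:assum}\ref{it:jac_inv} the partial Jacobian $\nabla_q g(\zopt)$ with respect to the dependent block is invertible. These are exactly what the theorem requires to solve $g(p,q)=0$ locally for $q$ as a function of $p$.

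Applying the theorem then yields a neighbourhood $\pnei$ of $\popt$ and a unique $C^2$ map $\varphi\colon\pnei\to\reals^m$ with $\varphi(\popt)=\qopt$ and $g\big(p,\varphi(p)\big)=0$ for all $p\in\pnei$. I would set $w(p):=[\,p\trans\;\varphi(p)\trans\,]\trans$. The first block $p\mapsto p$ is smooth and $\varphi$ is $C^2$, so $w$ is $C^2$; moreover $w(\popt)=[\,\popt{}\trans\;\qopt{}\trans\,]\trans=\zopt$ and $g\big(w(p)\big)=g\big(p,\varphi(p)\big)=0$, which is Eq.~\eqref{eq:explicit_sol}. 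Since $w$ expresses the full decision vector $z=(p,q)$ as a function of the independent variables, its first block is necessarily the identity, so uniqueness of $w$ reduces to the uniqueness of its dependent block $\varphi$ guaranteed by the theorem.

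For the Jacobian formula I would differentiate the identity $g\big(w(p)\big)=0$ on $\pnei$. The block structure of $w$ gives $\nabla w(p)=\begin{bmatrix}\ident_n\\\nabla\varphi(p)\end{bmatrix}$, so it remains to compute $\nabla\varphi(\popt)$. Applying the chain rule to $g\big(p,\varphi(p)\big)\equiv 0$ produces $\nabla_p g+\nabla_q g\,\nabla\varphi(p)=0$; evaluating at $p=\popt$ and using the invertibility of $\nabla_q g(\zopt)$ from Assum.~\ref{th:assum}\ref{it:jac_inv} yields $\nabla\varphi(\popt)=-\nabla_q g\inv\nabla_p g$ at $z=\zopt$. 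Stacking this beneath $\ident_n$ gives precisely Eq.~\eqref{eq:w_jac}.

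Since the analytic content is carried entirely by the cited implicit function theorem, there is no genuine obstacle here. The only points that demand care are the bookkeeping that turns the implicit solution $\varphi$ for the dependent variables into the full map $w$ on $\zspace$ (appending the identity block for the independent variables) and the correct chain-rule differentiation of the constraint identity. I expect the dimension tracking in the Jacobian — $\ident_n$ being $n\times n$, $\nabla_q g\inv$ being $m\times m$, and $\nabla_p g$ being $m\times n$, for a combined $(n+m)\times n$ block — to be the easiest place to slip, so I would confirm these shapes explicitly before concluding.
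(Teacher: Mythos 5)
Your proposal is correct and follows exactly the route the paper intends: the paper states this result as a direct corollary of the implicit function theorem of \citet[Thms.~3.3.1 and 3.3.2]{krantz2003ift} without spelling out a proof, and your argument simply fills in that invocation — verifying the hypotheses from Assum.~\ref{th:assum}, stacking the implicit solution $\varphi$ beneath the identity block to form $w$, and recovering Eq.~\eqref{eq:w_jac} by chain-rule differentiation of the constraint identity. The dimension bookkeeping and the reduction of uniqueness of $w$ to uniqueness of $\varphi$ are both handled correctly, so nothing is missing.
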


A consequence of Cor.~\ref{th:explicit_sol} is that, at least in the
neighborhood of the solution, the optimization problem can be rewritten
in a reduced form without constraints, in terms of only
the independent variables $p$:
\begin{equation}
  \label{eq:cond_problem}
  \begin{aligned}
    \operatorname*{maximize}_{p\in\pnei} &&\;&
    \tilde\ell(p),
  \end{aligned}
\end{equation}
where the reduced merit $\tilde\ell\colon\pnei\to\reals$ is of class
$C^2$ and given by
\begin{equation}
  \label{eq:cond_merit_defn}
  \tilde\ell(p) \coloneqq \ell\big(w(p)\big) \qquad \forall p\in\pnei.
\end{equation}
Both problems are equivalent in terms of having the same solution.

\subsection{Definition of the Lagrangian and bordered Hessian}

We now define the Lagrangian 
$L\colon\zspace\times\reals^m\to\reals$ of the augmented 
equality-constrained problem \eqref{eq:aug_problem}
so that the theorem can be stated:
\begin{equation}
  \label{eq:lagrangian_defn}
  L(z,\lambda) \coloneqq \ell(z) + g(z)\trans\lambda,
\end{equation}
where $\lambda\in\reals^m$ are the Lagrange multipliers.
A standard result of constrained optimization that follows from 
Assum.~\ref{th:assum}
is that a necessary condition for optimality is the existence of $\lmopt$ such 
that
\begin{align}
  \label{eq:fonc}
  \nabla_z L(\zopt,\lmopt) = \nabla \ell(\zopt) + 
  \nabla g(\zopt)\trans\lmopt = 0.
\end{align}

The Hessian matrix $H_L$ of the Lagrangian with respect to $z$ and $\lambda$ at 
the optimum is then given by
\begin{equation}
  \label{eq:HL_defn}
  H_L \coloneqq 
  \begin{bmatrix}
    \nabla^2 \ell + \sum_{i=1}^{m}\lmopt_i\nabla^2 g_i &\phantom{AA} &
    \nabla g\trans \\
    \nabla g && 0
  \end{bmatrix}_{z=\zopt}.
\end{equation}
The $H_L$ matrix is sometimes called the bordered Hessian and is used in the
numerical solution of the optimization problem.
It is usually readily available to the user for evaluation after successful 
termination of the optimization.

\subsection{Equivalence of inverse Hessians}
We are now ready to present the main result of this article.

\begin{thm}
  \label{th:main}
  If Assum.~\ref{th:assum} holds, then both $\nabla^2\tilde\ell(\popt)$ 
  and $H_L$ are invertible.
  Furthermore, the inverse of the Hessian matrix of
  the reduced problem \eqref{eq:cond_problem} at the optimum
  $[\nabla^2 \tilde \ell(\popt)]\inv$
  is given by the first $n$ columns and $n$ rows of $H_L\inv$.
\end{thm}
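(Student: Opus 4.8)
The plan is to anchor everything on one clean identity, $\nabla^2\tilde\ell(\popt) = W\trans \mathcal H\, W$, where $W := \nabla w(\popt)\in\reals^{(n+m)\times n}$ is the Jacobian from Cor.~\ref{th:explicit_sol} and $\mathcal H := \big[\nabla^2\ell + \sum_{i=1}^m \lmopt_i \nabla^2 g_i\big]_{z=\zopt}$ is the top-left $(n+m)\times(n+m)$ block of $H_L$. First I would differentiate $\tilde\ell(p)=\ell\big(w(p)\big)$ twice with the chain rule, giving $\nabla^2\tilde\ell = W\trans(\nabla^2\ell)W + \sum_k[\nabla\ell]_k\,\nabla^2 w_k$, with $w_k$ the $k$-th component of $w$. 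The elimination of the unwieldy $\nabla^2 w_k$ term is the crux: differentiating the constraint identity $g\big(w(p)\big)=0$ twice yields $\sum_k[\nabla g_a]_k\,\nabla^2 w_k = -W\trans(\nabla^2 g_a)W$ for each constraint $a$, while the optimality condition \eqref{eq:fonc} gives $\nabla\ell = -\nabla g\trans\lmopt$; feeding the latter into the offending term and then applying the former collapses it to $\sum_i\lmopt_i W\trans(\nabla^2 g_i)W$, which merges with $W\trans(\nabla^2\ell)W$ to give the identity. This index bookkeeping is the step I expect to be the main obstacle.

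Two structural facts about $W$ fall out of the same differentiations. Differentiating $g\big(w(p)\big)=0$ once gives $A W = 0$ with $A := \nabla g(\zopt)$, so the columns of $W$ lie in $\ker A$; as the top block of $W$ is $\ident_n$ it has full column rank $n$, and as $\nabla_q g(\zopt)$ is invertible (Assum.~\ref{th:assum}\ref{it:jac_inv}) the matrix $A$ has full row rank $m$ and $\dim\ker A = n$, so the columns of $W$ form a basis of $\ker A$ and $W\trans A\trans = 0$. For invertibility I would inspect $\ker H_L$: a pair $(v,\mu)$ lies in it iff $\mathcal H v + A\trans\mu = 0$ and $A v = 0$; writing $v = Ws$ and left-multiplying the first equation by $W\trans$ annihilates $A\trans\mu$ and leaves $W\trans\mathcal H W s = \nabla^2\tilde\ell\, s = 0$, forcing $s=0$, then $v=0$, then $\mu=0$ by full row rank of $A$. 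The reverse implication builds a nontrivial element of $\ker H_L$ from any null vector of $\nabla^2\tilde\ell$, using $\operatorname{range}(A\trans)=(\ker A)^\perp$. Thus $H_L$ and $\nabla^2\tilde\ell$ are invertible together, the single remaining nondegeneracy being that of the strict maximizer guaranteed by well-posedness.

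Finally, to pin down the block without inverting $H_L$ wholesale, I would solve $H_L[v\trans\;\mu\trans]\trans = [f\trans\;0\trans]\trans$ for the special forcing $f = [c\trans\;0\trans]\trans$ with arbitrary $c\in\reals^n$. The constraint block again forces $v = Ws$, and left-multiplying the first block row by $W\trans$ gives $\nabla^2\tilde\ell\, s = W\trans f$; since $W\trans = [\ident_n\;\;{-}\nabla_p g\trans\,\nabla_q g\mtrans]$ and the $q$-part of $f$ vanishes, $W\trans f = c$, so $s = [\nabla^2\tilde\ell]\inv c$. Because the top block of $W$ is $\ident_n$, the $p$-part of $v = Ws$ is exactly $s$, and it is by construction the top-left $n\times n$ block of $H_L\inv$ acting on $c$. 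As $c$ is arbitrary, that block equals $[\nabla^2\tilde\ell(\popt)]\inv$, which is the assertion of the theorem.
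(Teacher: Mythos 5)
Your proof is correct, and its first half --- the identity $\nabla^2\tilde\ell = \nabla w\trans\big[\nabla^2\ell + \sum_i \lmopt_i \nabla^2 g_i\big]\nabla w$, obtained by two chain-rule differentiations and elimination of the $\nabla^2 w_k$ terms via the twice-differentiated constraint identity together with the first-order condition \eqref{eq:fonc} --- is exactly the paper's Lemma~\ref{th:cond_hess_expr}. Where you genuinely diverge is in the second half. The paper groups the $(q,\lambda)$ variables into a lower-right block $D$ of $H_L$, writes down $D\inv$ explicitly (possible because $\nabla_q g$ is invertible), verifies by direct multiplication that the Schur complement $A - BD\inv B\trans$ equals $\nabla^2\tilde\ell$, and then invokes the blockwise inversion formula \eqref{eq:HL_inv}, from which both the invertibility of $H_L$ and the identification of the top-left block follow at once. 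You instead exploit the bordered structure directly: the columns of $W=\nabla w$ form a basis of $\ker\nabla g$, a kernel argument shows $H_L$ and $\nabla^2\tilde\ell$ are nonsingular together, and the block identification comes from solving $H_L[v\trans\;\mu\trans]\trans=[c\trans\;0\trans]\trans$ for arbitrary $c\in\reals^n$ and reading off the $p$-part of the solution. Your route is more elementary --- no blockwise inversion formula, no explicit $D\inv$ --- and its linear algebra needs only full row rank of $\nabla g$ plus a kernel basis, which is a mild gain in generality; the paper's heavier computation buys the complete expression \eqref{eq:HL_inv} for $H_L\inv$, which it reuses in Lemma~\ref{th:aug_dec_cov} to express $(\nabla w)(\nabla^2\tilde\ell)\inv(\nabla w)\trans$ as the full $(n+m)\times(n+m)$ block of $H_L\inv$, whereas your special right-hand sides recover only the $n\times n$ block (though reading off the $q$-part of $v=Ws$ as well would extend your argument to that lemma too). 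Note finally that both proofs rest on the same slightly informal step, the paper's Cor.~\ref{th:H_cond_inv}: uniqueness of the constrained maximizer is taken to imply that $\nabla^2\tilde\ell$ is negative definite rather than merely negative semidefinite; your appeal to ``well-posedness'' is neither weaker nor stronger than the paper's own treatment of this point.
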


To prove Thm.~\ref{th:main} we obtain the expression
of the inverse reduced Hessian and show that it is
equal to a submatrix of the blockwise inversion of $H_L$.
To obtain the reduced Hessian, we state a lemma often used
to obtain sufficient second-order conditions for constrained optima.
To simplify notation, in what follows all derivatives are evaluated at
$p=\popt$ and $z=\zopt$.

\begin{lem}
  \label{th:cond_hess_expr}
  The Hessian matrix of the reduced problem \eqref{eq:cond_problem},
  at the optimum, is given by
  \begin{equation}
    \label{eq:cond_hess}
    \nabla^2\tilde\ell =
    \nabla w \trans
    \left[
      \nabla^2 \ell + \sum_{s=1}^{m}\lmopt_s\nabla^2 g_s
    \right]
    \nabla w.
  \end{equation}
\end{lem}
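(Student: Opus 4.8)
The plan is to compute $\nabla^2\tilde\ell$ directly from the definition $\tilde\ell(p) = \ell\big(w(p)\big)$ in \eqref{eq:cond_merit_defn} by differentiating the composition twice, and then to eliminate the second derivatives of the implicit solution $w$ using the constraint identity of Cor.~\ref{th:explicit_sol} together with the first-order optimality condition \eqref{eq:fonc}. First I would apply the chain rule to obtain the gradient $\nabla\tilde\ell(p) = \nabla w(p)\trans\,\nabla\ell\big(w(p)\big)$. Differentiating once more with respect to $p$ yields two contributions, since both factors depend on $p$: a term $\nabla w\trans\,\nabla^2\ell\,\nabla w$ coming from differentiating $\nabla\ell$ through $w$, and a residual term $\sum_{k=1}^{n+m}[\nabla\ell]_k\,\nabla^2 w_k$ coming from differentiating the Jacobian $\nabla w\trans$, where $w_k$ is the $k$-th component of $w$ and $\nabla^2 w_k$ its Hessian with respect to $p$.

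The main obstacle is disposing of this residual term, because the second derivatives $\nabla^2 w_k$ of the implicit function are not available in closed form and would be cumbersome to compute from \eqref{eq:w_jac}. The key observation is that we never need them explicitly. Applying the same two-fold differentiation to each scalar constraint identity $g_s\big(w(p)\big)=0$, which holds for all $p\in\pnei$ by Cor.~\ref{th:explicit_sol}, and using that its left-hand side is identically zero, gives, for every $s$, the relation $\sum_{k}[\nabla g_s]_k\,\nabla^2 w_k = -\nabla w\trans\,\nabla^2 g_s\,\nabla w$. Thus each awkward combination of the $\nabla^2 w_k$ weighted by the constraint gradients is expressible through the readily available Hessians $\nabla^2 g_s$.

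Finally I would combine these relations with the first-order necessary condition \eqref{eq:fonc}, which at the optimum reads $\nabla\ell = -\nabla g\trans\lmopt$, equivalently $[\nabla\ell]_k = -\sum_{s=1}^{m}\lmopt_s[\nabla g_s]_k$. Substituting this expression for $[\nabla\ell]_k$ into the residual term and then invoking the constraint relations above converts $\sum_{k}[\nabla\ell]_k\,\nabla^2 w_k$ into exactly $\sum_{s}\lmopt_s\,\nabla w\trans\,\nabla^2 g_s\,\nabla w$. Adding this to the first contribution and factoring out the common $\nabla w\trans(\,\cdot\,)\nabla w$ yields \eqref{eq:cond_hess}. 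The crux of the argument is that the Lagrange multipliers furnished by \eqref{eq:fonc} are precisely the weights needed to turn the intractable second-derivative-of-$w$ terms into the Lagrangian curvature block; this is also why the identity is asserted only at the optimum, where \eqref{eq:fonc} is available to effect the cancellation.
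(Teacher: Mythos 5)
Your proposal is correct and follows essentially the same route as the paper's proof: chain rule applied twice to $\tilde\ell = \ell\circ w$, elimination of the gradient $\nabla\ell$ via the first-order condition \eqref{eq:fonc}, and replacement of the $\nabla^2 w_k$ terms using the twice-differentiated constraint identity $g_s\big(w(p)\big)=0$. The only difference is cosmetic, as you work in matrix notation while the paper carries out the identical manipulations elementwise in index notation.
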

\begin{proof}
  Using the chain rule, we have that
  \begin{gather}
    \pderiv{\tilde\ell}{p_i} =
    \sum_{k=1}^{m+n} \pderiv{\ell}{z_k} \pderiv{w_k}{p_i} \\
    \label{eq:cond_hess_elems}
    \pderiv{^2\tilde\ell}{p_i\partial p_j} =
    \sum_{k=1}^{m+n}\sum_{r=1}^{m+n} 
    \pderiv{^2\ell}{z_k\partial z_r} \pderiv{w_k}{p_i} \pderiv{w_r}{p_j}
    + \sum_{k=1}^{m+n} \pderiv{\ell}{z_k}\pderiv{^2 w_k}{p_i\partial p_j}.
  \end{gather}
  Furthermore, from the first-order necessary conditions for the optimum 
  \eqref{eq:fonc}
  we have that
  \begin{equation}
    \label{eq:unconstrained_grad_mult}
    \pderiv{\ell}{z_k} = -\sum_{s=1}^{m}\lmopt_s\pderiv{g_s}{z_k},
  \end{equation}
  leading to the following expression for the last term of 
  Eq.~\eqref{eq:cond_hess_elems}:
  \begin{equation}
    \sum_{k=1}^{m+n} \pderiv{\ell}{z_k}\pderiv{^2 w_k}{p_i\partial p_j} = 
    -\sum_{s=1}^{m}\lmopt_s
    \sum_{k=1}^{m+n} \pderiv{g_s}{z_k}\pderiv{^2 w_k}{p_i\partial p_j}
  \end{equation}

  Next, define $\tilde g(p):=g\big(w(p)\big)$.
  Applying the chain rule,
  \begin{gather}
    \pderiv{\tilde g_s}{p_i} = 
    \sum_{k=1}^{n+m}\pderiv{g_s}{z_k}\pderiv{w_k}{p_i},
    \\
    \pderiv{^2 \tilde g_s}{p_i\partial p_j} = 
    \sum_{k=1}^{n+m}\sum_{r=1}^{n+m}
    \pderiv{^2 g_s}{z_k\partial z_r}\pderiv{w_k}{p_i}\pderiv{w_r}{p_j} +
    \sum_{k=1}^{n+m}
    \pderiv{g_s}{z_k}\pderiv{^2 w_k}{p_i\partial p_j}.
  \end{gather}  
  As $w(p)$ is an explicit solution to the constraints, $\tilde g$ and all its 
  derivatives are identically zero for all $p\in\pnei$, so
  \begin{equation}
    \label{eq:constraint_hessian}
    \sum_{k=1}^{n+m}
    \pderiv{g_s}{z_k}\pderiv{^2 w_k}{p_i\partial p_j} = 
    -\sum_{k=1}^{n+m}\sum_{r=1}^{n+m}
    \pderiv{^2 g_s}{z_k\partial z_r}\pderiv{w_k}{p_i}\pderiv{w_r}{p_j}.
  \end{equation}
  Substituting \eqref{eq:constraint_hessian} and 
  \eqref{eq:unconstrained_grad_mult}
  into \eqref{eq:cond_hess_elems}, we obtain
  \begin{multline}
    \pderiv{^2\tilde\ell}{p_i\partial p_j} =\\
    \sum_{k=1}^{m+n}\sum_{r=1}^{m+n} 
    \paren{
      \pderiv{^2\ell}{z_k\partial z_r} + 
      \sum_{s=1}^{m}\lmopt_s \pderiv{^2 g_s}{z_k\partial z_r}
    }
    \pderiv{w_k}{p_i}\pderiv{w_r}{p_j},
  \end{multline}
  which is the elementwise expression of \eqref{eq:cond_hess}.
\end{proof}

\begin{cor}
  \label{th:H_cond_inv}
  If Assum.~\ref{th:assum} holds, then $\nabla^2\tilde\ell$ is invertible.
\end{cor}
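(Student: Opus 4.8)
The plan is to read the corollary off the second-order optimality conditions of the reduced problem \eqref{eq:cond_problem}, whose Hessian is exactly the matrix in question and is given explicitly by Lem.~\ref{th:cond_hess_expr}. First I would invoke the equivalence of the reduced and augmented problems, together with the uniqueness and interiority in Assum.~\ref{th:assum}, to conclude that $\popt$ is the unique---hence strict---maximizer of $\tilde\ell$ over $\pnei$, lying in the interior of its domain. Since $\tilde\ell$ is of class $C^2$ by Cor.~\ref{th:explicit_sol}, an interior maximizer must satisfy $\nabla\tilde\ell(\popt)=0$ together with the second-order necessary condition that $\nabla^2\tilde\ell(\popt)$ be negative semidefinite.

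The remaining and genuinely substantive step is to sharpen negative semidefiniteness into negative definiteness, which for a symmetric matrix is the same as invertibility; this is where I expect the main obstacle to lie. A strict maximizer of a $C^2$ function need not have a definite Hessian---the scalar map $p\mapsto-p^4$ has an isolated global maximizer at which the second derivative vanishes---so uniqueness of the optimum does not by itself rule out a flat direction. To close the gap I would lean on the well-posedness of the estimation problem, reading the uniqueness hypothesis of Assum.~\ref{th:assum} as the requirement that $\popt$ be a \emph{non-degenerate} maximizer, i.e.\ that $\tilde\ell$ curve strictly away from the optimum along every admissible direction. Through Lem.~\ref{th:cond_hess_expr} this is precisely the second-order sufficient condition that $\nabla w\trans\bigl[\nabla^2\ell+\sum_{s=1}^{m}\lmopt_s\nabla^2 g_s\bigr]\nabla w$ be negative definite; here $\nabla w$ has full column rank, since its leading block is $\ident_n$, and its range coincides with the constraint tangent space $\Null{\nabla g}$, so no admissible direction is lost in the projection.

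Granting this non-degeneracy, the conclusion is immediate, as a negative-definite symmetric matrix has only nonzero eigenvalues and is therefore invertible. I regard the corollary as recording that the non-degeneracy built into a well-posed identification problem---absent which the estimate would be asymptotically unidentifiable and the very covariance it is meant to approximate would be unbounded---transfers, via Lem.~\ref{th:cond_hess_expr}, from the reduced merit $\tilde\ell$ to the Hessian whose inverse is the object of interest.
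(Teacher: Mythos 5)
Your proposal follows essentially the same route as the paper's own proof: both use Lem.~\ref{th:cond_hess_expr} together with the fact that $\nabla w$ has full column rank and spans $\Null{\nabla g}$ to identify $\nabla^2\tilde\ell$ with the projected Lagrangian Hessian of the second-order optimality conditions, and both extract negative definiteness (hence invertibility) from the uniqueness hypothesis of Assum.~\ref{th:assum}. The one substantive remark: the gap you flag---that a unique, even strict, interior maximizer of a $C^2$ function need only have a negative \emph{semi}definite Hessian, as $p\mapsto -p^4$ shows---is genuine, and the paper's proof steps over it silently, asserting that definiteness is ``a consequence of the necessary conditions'' given uniqueness, which is not literally true. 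Your resolution, reading non-degeneracy of the optimum into the well-posedness assumption, is precisely the reading the paper implicitly adopts (and is consistent with the paper's later remark that the results cannot be used when the problem is unidentifiable); so your argument is neither weaker nor stronger than the paper's, only more candid about where the strict curvature actually comes from.
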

\begin{proof}
  Note that $\nabla w$ is a basis for the null space of $\nabla g$, which
  can be verified by taking their product.
  Lemma~\ref{th:cond_hess_expr} then amounts to the second-order 
  sufficient and necessary conditions for constrained optimality
  \citep[Thms.~12.5 and 12.6]{nocedal2006no}.
  Additionally, as $\zopt$ is assumed to be the unique constrained maximum,
  a consequence of the necessary conditions is that $\nabla^2\tilde\ell$
  is negative-definite, hence invertible.
\end{proof}

We are now ready to prove the main theorem.

\begin{figure*}[t]
  \begin{equation}
    \label{eq:HL_inv}
    H_L\inv = 
    \begin{bmatrix}
      (A-B D\inv B\trans)\inv & 
      -(A-BD\inv B\trans)\inv BD\inv \\
      -D\inv B\trans (A-BD\inv B\trans)\inv &\qquad
      D\inv +D\inv B\trans (A-BD\inv B\trans)\inv BD\inv
    \end{bmatrix}
  \end{equation}
  \hrulefill
\end{figure*}

\begin{proof}[Proof of Thm.~\ref{th:main}]
  To begin, note that from the definition of $L$ in \eqref{eq:lagrangian_defn}, 
  we have that the bracketed expression in \eqref{eq:cond_hess} is 
  $\nabla^2_z L$, which has the following block structure:
  \begin{equation}
    \label{eq:lag_hess_z}
    \nabla_z^2 L = \nabla^2 \ell + \sum_{i=1}^{m}\lmopt_i\nabla^2 g_i = 
    \begin{bmatrix}
      \nabla_p^2 L & \nabla_{qp}^2 L\trans \\
      \nabla_{qp}^2 L & \nabla_q^2 L
    \end{bmatrix}.
  \end{equation}
  Substituting \eqref{eq:w_jac} and \eqref{eq:lag_hess_z} into 
  \eqref{eq:cond_hess}, 
  \begin{multline}
    \label{eq:cond_hess_terms}
    \nabla^2 \tilde \ell = 
    \nabla_p^2 L - \nabla_{qp}^2 L\trans \nabla_q\inv\nabla_p g - 
    \nabla_p g\trans\nabla_qg\mtrans\nabla_{qp}^2 L \\
    + \nabla_p g\trans\nabla_q g\mtrans\nabla_q^2 L \nabla_q g\inv\nabla_p g.
  \end{multline}
  
  Next we obtain the expression of the first $n$ rows and $n$ columns of 
  $H_L\inv$.
  The $H_L$ matrix of \eqref{eq:HL_defn} has the following block structure:
  \begin{subequations}
    \label{eq:HL_block}
    \begin{align}
      \label{eq:HL_block_a}
      H_L &=
      \begin{bmatrix}
        A & B \\ B\trans & D
      \end{bmatrix}, &
      D &:= 
      \begin{bmatrix}
        \nabla_q^2 L & \nabla_q g\trans \\
        \nabla_q g & 0
      \end{bmatrix}, \\
      \label{eq:HL_block_b}
      A &:= \nabla_p ^2 L, & 
      B &:=
      \begin{bmatrix}
        \nabla_{qp}^2 L\trans & \nabla_p g\trans
      \end{bmatrix}.
    \end{align}
  \end{subequations}%
  \noeqref{eq:HL_block_a, eq:HL_block_b}%
  From the blockwise inversion formula \cite[Prop.~2.8.7]{bernstein2009mm},
  we have that if $D$ and $A-B D\inv B\trans$ are nonsingular,
  then so is $H_L$ and its inverse is given by \eqref{eq:HL_inv}.
  
  As $\nabla_q g$ is invertible due to Assum.~\ref{th:assum}\ref{it:jac_inv},
  then $D$ is invertible,
  \begin{equation}
    \label{eq:lower_block_inverse}
    D\inv = 
    \begin{bmatrix}
      0 & \nabla_q g\inv \\
      \nabla_q g \mtrans & -\nabla_q g\mtrans \nabla_q^2 L \nabla_q g\inv
    \end{bmatrix},
  \end{equation}
  which can be verified directly by taking the product.
  Again, taking the product and comparing to the right-hand side of 
  \eqref{eq:cond_hess_terms},
  \begin{equation}
    \label{eq:H_cond_inv_blk}
    A-B D\inv B\trans = \nabla^2\tilde \ell,
  \end{equation}
  which by Cor.~\ref{th:H_cond_inv} is invertible.
  Consequently, the blockwise inversion formula is applicable and from 
  \eqref{eq:HL_inv} and \eqref{eq:H_cond_inv_blk} we have that 
  $(\nabla^2\tilde \ell)\inv$ equals the first $n$ rows and $n$ columns of
  $H_L\inv$.
\end{proof}

\subsection{Approximate covariance of all decision variables}

When the inverse Hessian is used as an approximation of the independent
variables' covariance, a natural question that arises is if it can be used
to obtain the covariance of the full vector of decision variables, 
or at least an estimate of.
If a linear approximation of the constraints and, consequently, of the 
relationship between the independent and dependent variables $q$ and $p$ is
used, 
\begin{equation}
  w(p) - \zopt \approx (\nabla w) (p - \popt)
\end{equation}
for $p$ in a neighbourhood of the optimum $\popt$.
This approximation, together with the use of $(\nabla^2\tilde \ell)\inv$ for
the covariance of the estimates, yields
\begin{subequations}
  \label{eq:aug_dec_cov}
  \begin{align}
    \label{eq:aug_dec_cov_classical}
    \E{(\zopt - \bar z)(\zopt - \bar z)\trans} &\approx
    (\nabla w) (\nabla^2\tilde \ell)\inv (\nabla w)\trans,\\
    \label{eq:aug_dec_cov_bayesian}
    \E{(z-\zopt)(z-\zopt)\trans \cond y} &\approx
    (\nabla w) (\nabla^2\tilde \ell)\inv (\nabla w)\trans,
    \qquad\;
  \end{align}
\end{subequations}%
\noeqref{eq:aug_dec_cov_classical, eq:aug_dec_cov_bayesian}%
in which $\bar z$ is the true value of the estimates, in the context of
maximum likelihood estimation, and the expected value of $\zopt$ if the
estimator is unbiased.
In a Bayesian context \eqref{eq:aug_dec_cov_bayesian}, $z$ is the random 
variable
being estimated, $y$ is the data, and the posterior mode $\zopt$ is used as the mean.

The results of the previous section can be used to obtain an expression of
the right-hand sides of \eqref{eq:aug_dec_cov} in terms of $H_L\inv$.
These are summarized in the following Lemma.

\begin{lem}
  \label{th:aug_dec_cov}
  If Assum.~\ref{th:assum} holds, then 
  $(\nabla w) (\nabla^2\tilde \ell)\inv (\nabla w)\trans$
  is given by the first $m+n$ rows and $n+m$ columns of $H_L\inv$.
\end{lem}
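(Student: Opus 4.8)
The plan is to compute the leading $(n+m)\times(n+m)$ principal submatrix of $H_L\inv$ block by block and match it against the block expansion of $(\nabla w)\,(\nabla^2\tilde\ell)\inv\,(\nabla w)\trans$. Throughout I reuse the partition $A,B,D$ of \eqref{eq:HL_block}, write $S\coloneqq A-BD\inv B\trans=\nabla^2\tilde\ell$ (invertible by \eqref{eq:H_cond_inv_blk} and Cor.~\ref{th:H_cond_inv}), and let $W\coloneqq -\nabla_q g\inv\nabla_p g$ denote the lower block of $\nabla w$ from \eqref{eq:w_jac}. First I would expand the target product directly: since $\nabla w=\begin{bmatrix}\ident_n\\ W\end{bmatrix}$,
\[
  (\nabla w)\,S\inv\,(\nabla w)\trans =
  \begin{bmatrix}
    S\inv & S\inv W\trans \\
    W S\inv & W S\inv W\trans
  \end{bmatrix},
\]
partitioned conformally with the independent variables $p$ and the dependent variables $q$. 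It then suffices to show these four blocks coincide with the corresponding $p$- and $q$-indexed blocks of $H_L\inv$.

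The second step is to read those blocks off from \eqref{eq:HL_inv}. The partition \eqref{eq:HL_block} orders the rows and columns of $H_L$ as $p$ (the $A$ block), then $(q,\lambda)$ (the $D$ block); hence the leading $(n+m)\times(n+m)$ submatrix of $H_L\inv$ consists of the entire $(p,p)$ block together with the $q$-restrictions of the off-diagonal and $D$-diagonal terms. From \eqref{eq:HL_inv} these are $S\inv$ for the $(p,p)$ block, the $q$-columns of $-S\inv BD\inv$ for the $(p,q)$ block, and the $(q,q)$ corner of $D\inv+D\inv B\trans S\inv BD\inv$ for the $(q,q)$ block, with the $(q,p)$ block following by symmetry of $H_L\inv$.

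The crux is a single identity: the $q$-column block of $BD\inv$ equals $-W\trans$. Substituting $B=\begin{bmatrix}\nabla_{qp}^2 L\trans & \nabla_p g\trans\end{bmatrix}$ and $D\inv$ from \eqref{eq:lower_block_inverse} and multiplying out, the $q$-columns of $BD\inv$ collapse to $\nabla_p g\trans\nabla_q g\mtrans=(\nabla_q g\inv\nabla_p g)\trans=-W\trans$, because the $(q,q)$ corner of $D\inv$ vanishes. With this in hand everything falls out: the $(p,q)$ block of $H_L\inv$ becomes $-S\inv(-W\trans)=S\inv W\trans$; and, writing $U\coloneqq BD\inv$ and using that $D\inv$ is symmetric so $D\inv B\trans=U\trans$, the $(q,q)$ corner of $D\inv B\trans S\inv BD\inv=U\trans S\inv U$ is $U_q\trans S\inv U_q=(-W\trans)\trans S\inv(-W\trans)=W S\inv W\trans$, where $U_q$ denotes the $q$-columns of $U$. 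All four blocks match.

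I expect the $(q,q)$ block to be the main obstacle, since it is the only one drawing on the bottom-right corner $D\inv+D\inv B\trans S\inv BD\inv$ of \eqref{eq:HL_inv}. The care required is purely in bookkeeping: one must track which sub-blocks of the $2m\times 2m$ matrices are indexed by $q$ versus $\lambda$, and verify that the bare $D\inv$ contribution to the $(q,q)$ corner is indeed zero, so that only the correction term survives and reproduces $W S\inv W\trans$. Once the $q$-column identity for $BD\inv$ is established, the remaining manipulations are routine.
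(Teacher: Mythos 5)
Your proof is correct and takes essentially the same route as the paper's, namely a direct block computation comparing $(\nabla w)(\nabla^2\tilde\ell)\inv(\nabla w)\trans$ against the leading $(n+m)\times(n+m)$ submatrix of $H_L\inv$ using \eqref{eq:HL_inv}, \eqref{eq:HL_block}, \eqref{eq:lower_block_inverse}, \eqref{eq:w_jac} and \eqref{eq:H_cond_inv_blk}. The paper only sketches this substitution, whereas you carry out the bookkeeping it omits --- in particular the key identity that the $q$-columns of $BD\inv$ equal $\nabla_p g\trans\nabla_q g\mtrans$ (minus the transpose of the lower block of $\nabla w$) and the observation that the $(q,q)$ corner of $D\inv$ vanishes --- so your argument is a complete version of exactly what the paper intends.
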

\begin{proof}
  Using the results obtained previously, a direct proof is possible.
  By substituting \eqref{eq:HL_block} and \eqref{eq:lower_block_inverse} into 
  \eqref{eq:HL_inv}, we can obtain the expression of $H_L\inv$.
  Similarly, \eqref{eq:w_jac} and \eqref{eq:cond_hess_terms} can be substituted
  into $(\nabla w) (\nabla^2\tilde \ell)\inv (\nabla w)\trans$ and we can see
  that the statement holds.
\end{proof}

It should be noted that the use of the approximation 
\eqref{eq:aug_dec_cov_classical} in the context of classical statistics and
maximum likelihood estimation is better justified when the constraints are not
data-dependent.
In this case, the same function $w$ of Cor.~\ref{th:explicit_sol} is valid for
all realizations of the experiment.
This occurs in the output-error method, for example, as the constraints encode 
the solution of a deterministic ordinary differential equation or 
difference equation.

An important issue in using these methods is that although in most large 
problems of practical interest the Lagrangian Hessian is sparse, its 
inverse is generally dense.
A workaround to this, which we use in the example of Sec.~\ref{sec:oem}, is to 
obtain individual columns of $H_L\inv$ by solving the linear system
$H_Lx=b$, in which $b$ is a column of the identity matrix.
This allows some unused elements of the covariance matrix to be discarded,
keeping the memory requirements low, the same solution used by
\citet[Sec.~2.2]{lopez2012mhe}.
We also remark that the results presented in this section cannot be used
to diagnose unidentifiability, as then Assum.~\ref{th:assum} would not
hold.
Correlation coefficients close to \num{+-1} can, however, be 
indicative of poorly identifiable, ill-conditioned problems, often
associated with overparametrization, improper excitation, or inadequate model
postulates.

\section{Application examples}
\label{sec:app}
We now consider applications of the results of Sec.~\ref{sec:results} to 
simulated problems in systems and control.
The first example consists of the estimation of the parameter and state-path
covariance
in the system identification of a nonlinear continuous-time system using the
collocation-based output-error method, formulated as a maximum likelihood 
problem.
The second example is an application to joint state-path and parameter 
estimation in stochastic differential equations (SDEs).
The estimated posterior covariance is used to calibrate a Markov chain 
Monte Carlo (MCMC) sampler of state-paths and parameters.

The implementation of the estimators used herein is in the open-source
software package \texttt{ceacoest}%
\footnote{Available at \url{http://github.com/cea-ufmg/ceacoest}},
the \emph{Centro de Estudos Aeronáuticos} Control and Estimation library,
under development by the author.
It has bindings for the large-scale nonlinear optimization solver IPOPT by
\citet{wachter2006iip}, which was used together with the HSL Mathematical
Software Library%
\footnote{%
  HSL. A collection of Fortran codes for large scale scientific computation. 
  \url{http://www.hsl.rl.ac.uk/}
}.
The code used to generate the data and analyses for this article is also 
available as free software%
\footnote{\url{http://github.com/dimasad/hessinv-code}}.

\subsection{Collocation-based output-error estimation}
\label{sec:oem}

The output-error method is a standard method for system identification and
parameter estimation.
Given the system parameters and initial condition, the system is simulated and 
the output error minimized, according to a loss function associated to a 
measurement noise distribution.
When there is no prior distribution for the estimates it is a maximum 
likelihood estimator.

When implemented with collocation methods, the decision variables of the 
optimization problem are augmented with the state vector at all simulation 
points and the simulation method is enforced as as equality constraints 
\citep[for a brief overview, see][]{dutra2019cbo}.
We used the trapezoidal rule for the collocation in these experiments,
see \citet{williams2005ope} for this and many alternative methods for 
collocation.
\citet{betts2003lsp} and \citet{betts2010pmo} discuss implementation issues in 
detail.

To illustrate the estimation of uncertainties, we perform simulated experiments
on the Van der Pol oscillator, a benchmark model
for modelling nonlinear dynamics and chaos \citep[Appx.~A.2]{aguirre2009mnd}.
Its dynamics is governed by the ODE
\begin{align}
  \label{eq:vdp}
  \dot x_1 &= x_2, &
  \dot x_2 &=\mu\big(1- x_1^2\big) x_2 - x_1,
\end{align}
where $x=[x_1, x_2]\trans$ is the state vector and
$\theta=[\mu\;\sigma]\trans$ is the unknown parameter vector, to be estimated.
Noisy measurements of the first state $x_1$ are available with variance
$\sigma^2$.
The system is simulated for the interval $t\in[0, 20]$, starting from 
$x_1(0)=0$ and $x_2(0)=1$.
The true values of the parameters, used to generate the data, are $\mu=2$ and 
$\sigma=\num{0.1}$.
The measurements are spaced by \num{0.1} time units and the collocation mesh
spacing is \num{0.01} time units.

\begin{figure}[b]
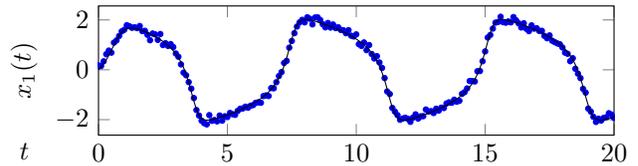

  \centering
  \inputtikzpicture{example}%
  \caption{
    Data from one realization of the Van der Pol oscillator.
    The true state path is the solid line and the marks are the noisy
    measurements.
  }
  \label{fig:example}
\end{figure}

\begin{figure}[b]
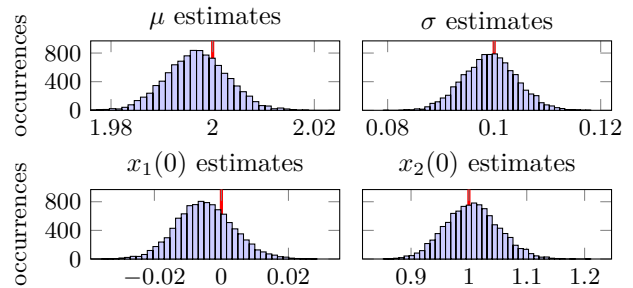

  \centering
  \inputtikzpicture{hist}
  \caption{
    Histogram of the Van der Pol parameter estimates. 
    The red line in the background marks the true values, used to generate 
    the data.
  }
  \label{fig:hist}
\end{figure}

\begin{figure}
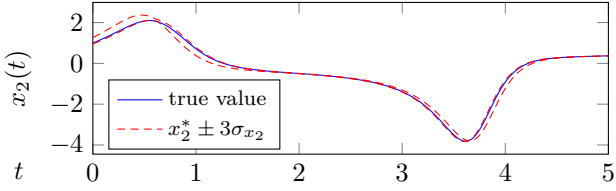

  \centering
  \inputtikzpicture{state_path_uncertainties}%
  \caption{
    Estimated $x_2$ path confidence bounds compared with the true simulated 
    values for a realization of the Van der Pol experiment.
  }
  \label{fig:state_path_uncertainties}
\end{figure}

\begin{table}[t]
  \caption{
    Estimation results for the simulated experiment.
    The first column is the sample standard deviation of the estimate; 
    the second is the mean estimated standard deviation $\hat\Sigma$ of each 
    estimate, obtained from the inverse Hessians; the third is the sample 
    standard deviation of $\hat\Sigma$.
  }
  \label{tab:res_par}
  \centering
  \pgfplotstabletypeset[
    header=false, col sep=comma,
    columns/0/.style={string type, column name={Estimate}},
    columns/1/.style={
      fixed, fixed zerofill, precision=4, column name={std.\ dev.}
    },
    columns/2/.style={
      fixed, fixed zerofill, precision=4, column name={mean $\hat\Sigma$}
    },
    columns/3/.style={
      fixed, fixed zerofill, precision=4, column name={std.\ dev.\ $\hat\Sigma$}
    },
  ]{tbl.txt}
\end{table}

The output error method needs an initial guess for all decision variables.
The initial guess for $x_1$ was obtained by low-pass filtering the measurements.
The guess for $x_2$ was the finite difference derivative of the $x_1$ guess.
Finally, the guess of $\mu$ is obtained from the other guesses using linear
least squares.

To evaluate the distribution of the estimates and how it relates to the 
uncertainty estimates obtained from the inverse Hessian, a total of \num{10000} 
different realizations of the noise were performed, one of which is shown
in Fig.~\ref{fig:example}.
The estimated variance $\hat V$ of each estimate is the corresponding element
of the diagonal of $-H_L\inv$.
The estimated standard deviation $\hat\Sigma$ is its square root.
These estimated uncertainties depend on the noise, so their mean and 
scatter is also analyzed.
The results are summarized in Table~\ref{tab:res_par} for the parameter vector
$\theta$ and initial state vector $x(0)$.
A good agreement between the observed and estimated standard deviations was
obtained.
The histograms of the parameter estimates are shown in Fig.~\ref{fig:hist}.
By employing the approximation \eqref{eq:aug_dec_cov_classical} and the result
of
Lemma~\ref{th:aug_dec_cov}, we can also obtain the estimate uncertainty of the
whole state path, shown in Fig.~\ref{fig:state_path_uncertainties}.

\subsection{Joint MAP state-path and parameter estimation}
For general nonlinear dynamical systems subject to noise, the posterior
distribution does not assume tractable closed-form solutions.
In such cases, Monte Carlo methods are a popular and powerful choice
for evaluating the posterior in detail, including features such as
skewness, excess kurtosis and nonlinear relationships between the variables.
To that end, estimates of the posterior covariance matrix obtained
from the results of
Sec.~\ref{sec:results} can aid the implementation of efficient Monte Carlo
samplers.
% In sequential Monte Carlo, for example, Gaussian approximations of the
% posterior arounde the mode, obtained using the inverse Hessian matrix, can be
% used as proposal distribution for sequential importance sampling
% \citep[Sec.~II.D.2]{doucet2000smc}.
In Markov chain Monte Carlo with the random-walk Metropolis algorithm,
for example,
the target distribution's covariance can be used to tune the jumping
scales and geometry for efficient sampling \citep{gelman1996emj}.

We demonstrate this application in a nonlinear continuous-time
system described by a stochastic differential equation (SDE), the Duffing
oscillator. 
It is a benchmark model for modeling 
nonlinear dynamics and chaos \citep{aguirre2009mnd} and state estimation in
SDEs \citep{ghosh2008sis, khalil2009nfc, namdeo2007nsd}.
The example and the estimator are similar to those of a previous work
\citep[Sec.~4.1]{dutra2017jmp}.
The dynamics is described by the following stochastic differential equation
\begin{align}
  \dd X_t &= [-A Z_t^3 -BZ_t -DX_t + \gamma\cos t]\dd t + \sigma_d \dd W_t,\\
  \dd Z_t &= X_t\,\dd t,
\end{align}
where $X_t$ and $Z_t$ are the system states; $W_t$ is a Wiener process;
$A$, $B$, and $D$ are unknown parameters, to be estimated; and
$\gamma$ and $\sigma_d$ are known parameters.

Discrete-time measurements $Y_k$ of $Z_t$ with independent Gaussian noise
were used for the estimation,
\begin{equation}
  \label{eq:gauss_meas}
  Y_k |X,Z,\Theta \sim \normald{Z_{k t_s}, \Sigma_y^2},
  \qquad k=0,\dotsc,N,
\end{equation}
where $t_s$ is the sampling period, the $\reals$-valued standard deviation
$\Sigma_y$ is
an unknown parameter, to be estimated, and $\Theta$ is the full vector of
unknown parameters.
Uniform priors for all parameters and initial conditions were used.
The system was simulated using the strong explicit order 1.5 scheme 
\citep[Sec.~11.2]{kloeden1992nss} with a time step of \num{0.005}.
The parameters and initial states used to generate the data are shown in 
Table~\ref{tab:duffing}.

The joint MAP state-path and parameter estimator \citep{dutra2017jmp}
is the solution to the following  optimal control problem:
\begin{equation}
  \label{eq:jme_problem}
  \begin{aligned}
    \maximize & \,&&\ell(x, z, \theta, \eta) \\
    \subjto &&& \dot x = -az^3 -bz -dx + \gamma\cos 
    + \sigma_d\eta \\
    &&&\dot z = x,
  \end{aligned}
\end{equation}
with merit function
\begin{multline}
  \label{eq:jme_merit}
  \textstyle
  \ell(x,z,\theta,\eta) = 
  -\frac12\sum_{k=0}^N \frac{[y_k - z(kt_s)]^2}{\sigma_y}
  \\ \textstyle
  -(N+1)\ln\sigma_y
  -\frac12\int_0^T \eta(t)^2\,\dd t + \frac{Td}{2},
\end{multline}
where $x\colon[0,T]\to\reals$ and $z\colon[0,T]\to\reals$ are the candidate
modal state-paths, $\eta$ is the associated process noise path, and 
$\theta=[a\;b\;d\;\sigma_y]\trans$ is the unknown
parameter vector.
The infinite-dimensional problem \eqref{eq:jme_problem} needs to be discretized
into finite-dimensional NLP for solution.
To do that, we used the same collocation method used to implement the
output-error in Sec.~\ref{sec:oem}, the trapezoidal method.
For some alternatives discretizations and implementation details, see 
\citet{betts2010pmo}.

\begin{table}
  \centering
  \caption{Parameter values used to simulate the Duffing oscillator.}
  \label{tab:duffing}
  \begin{tabular}{cccccccccccc}
    \hline
    $X_0$ & $Z_0$ & $A$ & $B$ & $D$ & $\Sigma_y$ & $\gamma$ & $\sigma_d$
    & $t_s$ & $T$ \\
    1.0 & 1.0 & 1.0 & -1.0 & 0.2 & 0.1 & 0.3 & 0.1 & 0.1 & 200 \\
    \hline
  \end{tabular}
\end{table}

From the approximate covariance matrix, obtained from $H_L\inv$,
we can inspect the posterior correlation coefficients,
shown in Fig.~\ref{fig:duffing_corr}.
These show important aspects of the problem: the states are highly
correlated with their neighbours, across a time window; and the
parameters are correlated with the states across all time instants.
These dependencies must be taken into account in the Metropolis--Hastings 
algorithm for good acceptance rates to be obtained 
\citep[p.~327]{robert2004mcs}.

\begin{figure}[t]
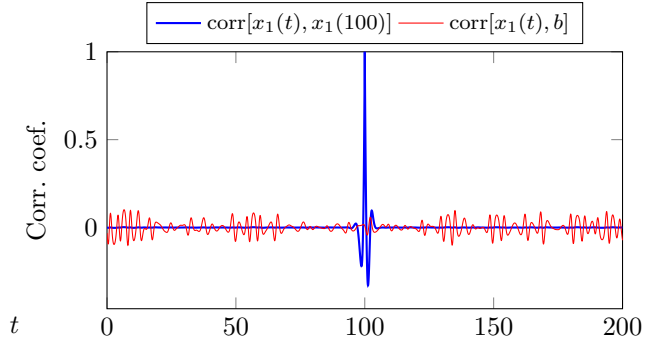

  \centering
  \inputtikzpicture{duffing_corr}%
  \caption{
    Approximate posterior correlation coefficients between the state-path
    $x_1(t)$ and the $x_1(100)$ state and the $b$ parameter, 
    for the Duffing oscillator.
  }
  \label{fig:duffing_corr}
\end{figure}

A random perturbation of the modal estimate was used to initialize a hybrid
Gibbs sampler \citep[Sec.~10.3]{robert2004mcs} for the posterior distribution
of the state-path and parameters.
We used Gibbs sampling to explore the local features of the posterior 
with a full random-walk Metropolis step after each 15 Gibbs cycles
to increase diversity and better explore the global support of the distribution.
Since the conditional distribution of each variable does not admit a tractable
form, Metropolis-within-Gibbs \citep[Sec.~10.3.3]{robert2004mcs} was used 
to sample the Gibbs steps.

In this problem, $\theta$, $z(0)$ and $x(t_0),\dotsc,x(t_N)$ are
a possible choice for the independent variables $p\in\reals^n$.
For the $i$-th independent variable, given its value at the $j$-th step of the
chain $p_i^{(j)}$, the Metropolis-within-Gibbs candidate was generated as
\begin{align}
  \tilde p_i^{(j+1)} &= p_i^{(j)} + \num{3.2}\epsilon_{ij}, &
  \epsilon_{ij} &\sim \normald{0, R_{ii}},
\end{align}
where the approximate covariance $R:= - (\nabla^2\tilde \ell)\inv$ is the 
inverse reduced Hessian, obtained from Thm.~\ref{th:main}. 
For the full Metropolis step, the candidate was generated according to
\begin{align}
  \tilde p^{(j+1)} &= p^{(j)} + \tfrac{\num{1.8}}{\sqrt n}\epsilon_{j}, &
  \epsilon_{j} &\sim \normald{0, R}.
\end{align}
These candidates were then accepted or rejected with the Metropolis
algorithm acceptance probability.
The scale factors were tuned starting with the values recommended by
\citet{gelman1996emj} and yielded an average acceptance rate of
\SI{26.8}{\%}, close to the \SI{23.4}{\%} suggested by \citet{roberts1997wco}.

\section{Conclusions}
\label{sec:conclusions}

In this article, we showed how to approximate estimate uncertainties in
equality-constrained MAP and maximum likelihood estimation.
These estimators have various applications in system identification and
state estimation, and methods which allow uncertainty estimation directly from
the augmented problem can help in their adoption by a wider userbase.
Two example applications of the results in systems and control were presented,
covering both MAP and ML estimation.

\bibliographystyle{plainnat}
\bibliography{bibtex-compressed}

\begin{thebibliography}{48}
\providecommand{\natexlab}[1]{#1}
\providecommand{\url}[1]{\texttt{#1}}
\expandafter\ifx\csname urlstyle\endcsname\relax
  \providecommand{\doi}[1]{doi: #1}\else
  \providecommand{\doi}{doi: \begingroup \urlstyle{rm}\Url}\fi

\bibitem[Aguirre and Letellier(2009)]{aguirre2009mnd}
L.A. Aguirre and C.~Letellier.
\newblock Modeling nonlinear dynamics and chaos: a review.
\newblock \emph{Math. Probl. Eng.}, 2009:\penalty0 238960, 2009.

\bibitem[Ando(2010)]{ando2010bms}
T.~Ando.
\newblock \emph{Bayesian model selection and statistical modeling}.
\newblock Chapman and Hall/CRC, 2010.

\bibitem[Andrieu et~al.(2004)Andrieu, Doucet, Singh, and
  Tadi\'c]{andrieu2004pmc}
C.~Andrieu, A.~Doucet, S.S. Singh, and V.B. Tadi\'c.
\newblock Particle methods for change detection, system identification, and
  control.
\newblock \emph{Proc. IEEE}, 92\penalty0 (3):\penalty0 423--438, 2004.

\bibitem[Bard(1974)]{bard1974npe}
Y.~Bard.
\newblock \emph{Nonlinear parameter estimation}.
\newblock Academic Press, 1974.

\bibitem[Bernstein(2009)]{bernstein2009mm}
D.S. Bernstein.
\newblock \emph{Matrix Mathematics: Theory, Facts, and Formulas}.
\newblock Princeton University Press, 2nd edition, 2009.

\bibitem[Betts(2010)]{betts2010pmo}
J.T. Betts.
\newblock \emph{Practical methods for optimal control and estimation using
  nonlinear programming}.
\newblock {SIAM}, 2nd edition, 2010.

\bibitem[Betts and Huffman(2003)]{betts2003lsp}
J.T. Betts and W.P. Huffman.
\newblock {Large Scale Parameter Estimation Using Sparse Nonlinear Programming
  Methods}.
\newblock \emph{SIAM J. Optim.}, 14\penalty0 (1):\penalty0 223--244, 2003.

\bibitem[Bock(1980)]{bock1980nti}
H.G. Bock.
\newblock {Numerical Treatment of Inverse Problems in Chemical Reaction
  Kinetics}.
\newblock In \emph{{Modelling of Chemical Reaction Systems: Proceedings of an
  International Workshop}}, pages 102--125, 1980.

\bibitem[Bock(1983)]{bock1982rap}
H.G. Bock.
\newblock {Recent Advances in Parameteridentification Techniques for O.D.E.}
\newblock In \emph{{Numerical Treatment of Inverse Problems in Differential and
  Integral Equations: Proceedings of an International Workshop}}, pages
  95--121, 1983.

\bibitem[Boisvert et~al.(2012)Boisvert, Donaldson, and
  Spiteri]{boisvert2012cod}
J.J. Boisvert, M.W. Donaldson, and R.J. Spiteri.
\newblock {Solving Parameter Estimation Problems with SOCX}.
\newblock In \emph{{Control and Optimization with Differential-Algebraic
  Constraints}}, chapter~12, pages 253--272. {SIAM}, 2012.

\bibitem[Caines and Ljung(1976)]{caines1976pee}
P.E. Caines and L.~Ljung.
\newblock Prediction error estimators: {A}symptotic normality and accuracy.
\newblock In \emph{Proceedings of the 1976 Joint Automatic Control Conference},
  1976.

\bibitem[Candy(2007)]{candy2007bpf}
J.V Candy.
\newblock Bootstrap particle filtering.
\newblock \emph{IEEE Signal Proc. Mag.}, 24\penalty0 (4):\penalty0 73--85,
  2007.

\bibitem[Cramér(1946)]{cramer1946mms}
H.~Cramér.
\newblock \emph{Mathematical Methods of Statistics}.
\newblock Princeton University Press, 1946.

\bibitem[Dutra et~al.(2012)Dutra, Teixeira, and Aguirre]{dutra2012jmaps}
D.A. Dutra, B.O.S. Teixeira, and L.A. Aguirre.
\newblock Joint maximum a posteriori smoother for state and parameter
  estimation in nonlinear dynamical systems.
\newblock In \emph{SYSID}, pages 900--905, 2012.

\bibitem[Dutra et~al.(2014)Dutra, Teixeira, and Aguirre]{dutra2014map}
D.A. Dutra, B.O.S. Teixeira, and L.A. Aguirre.
\newblock Maximum \textit{a posteriori} state path estimation: Discretization
  limits and their interpretation.
\newblock \emph{Automatica}, 50\penalty0 (5):\penalty0 1360--1368, 2014.

\bibitem[Dutra(2019)]{dutra2019cbo}
D.A.A. Dutra.
\newblock {Collocation-Based Output-Error Method for Aircraft System
  Identification}.
\newblock In \emph{Proc. AIAA Aviation 2019 Forum}, page 3087, 2019.

\bibitem[Dutra et~al.(2017)Dutra, Teixeira, and Aguirre]{dutra2017jmp}
D.A.A. Dutra, B.O.S. Teixeira, and L.A. Aguirre.
\newblock Joint maximum a posteriori state path and parameter estimation in
  stochastic differential equations.
\newblock \emph{Automatica}, pages 403--408, 2017.

\bibitem[Gelman et~al.(1996)Gelman, Roberts, and Gilks]{gellman1996emj}
A.~Gelman, G.O. Roberts, and W.R. Gilks.
\newblock Efficient {Metropolis} jumping rules.
\newblock In J.~M. Bernardo, J.O. Berger, A.P. Dawid, and A.F.M. Smith,
  editors, \emph{Bayesian Statistics 5}, pages 599--607. Oxford University
  Press, 1996.

\bibitem[Ghosh et~al.(2008)Ghosh, Manohar, and Roy]{ghosh2008sis}
S.~J. Ghosh, C.~S. Manohar, and D.~Roy.
\newblock A sequential importance sampling filter with a new proposal
  distribution for state and parameter estimation of nonlinear dynamical
  systems.
\newblock \emph{Proc. R. Soc. A}, 464\penalty0 (2089):\penalty0 25--47, 2008.

\bibitem[Jategaonkar(2015)]{jategaonkar2015fvs}
R.V. Jategaonkar.
\newblock \emph{Flight Vehicle System Identification}.
\newblock AIAA, 2nd edition, 2015.

\bibitem[Kashyap(1970)]{kashyap1970mli}
R.~Kashyap.
\newblock Maximum likelihood identification of stochastic linear systems.
\newblock \emph{IEEE Trans. Autom. Control}, 15\penalty0 (1):\penalty0 25--34,
  1970.

\bibitem[Khalil et~al.(2009)Khalil, Sarkar, and Adhikari]{khalil2009nfc}
M.~Khalil, A.~Sarkar, and S.~Adhikari.
\newblock Nonlinear filters for chaotic oscillatory systems.
\newblock \emph{Nonlinear Dynam.}, 55\penalty0 (1-2):\penalty0 113--137, 2009.

\bibitem[Klaas et~al.(2006)Klaas, Briers, de~Freitas, Doucet, Maskell, and
  Lang]{klaas2006fps}
M.~Klaas, M.~Briers, N.~de~Freitas, A.~Doucet, S.~Maskell, and D.~Lang.
\newblock Fast particle smoothing: If {I} had a million particles.
\newblock In \emph{ICML 2006}, pages 25--29, 2006.

\bibitem[Klein and Morelli(2006)]{klein2006asi}
V.~Klein and E.A. Morelli.
\newblock \emph{Aircraft system identification}.
\newblock {AIAA}, 2006.

\bibitem[Kloeden and Platen(1992)]{kloeden1992nss}
P.E. Kloeden and E.~Platen.
\newblock \emph{Numerical Solution of Stochastic Differential Equations}.
\newblock Springer, 1992.

\bibitem[Krantz and Parks(2003)]{krantz2003ift}
S.G. Krantz and H.R. Parks.
\newblock \emph{{The Implicit Function Theorem: History, Theory, and
  Applications}}.
\newblock Birkhäuser Basel, 2003.

\bibitem[Ljung(1999)]{ljung1999si}
L.~Ljung.
\newblock \emph{System Identification: Theory for the User}.
\newblock Prentice Hall, 2nd edition, 1999.

\bibitem[López-Negrete and Biegler(2012)]{lopez2012mhe}
R.~López-Negrete and L.T. Biegler.
\newblock {A Moving Horizon Estimator for processes with multi-rate
  measurements: A Nonlinear Programming sensitivity approach}.
\newblock \emph{J. Process Contr.}, 22\penalty0 (4):\penalty0 677--688, 2012.

\bibitem[Maine and Iliff(1981)]{maine1981tpe}
R.E. Maine and K.W. Iliff.
\newblock \emph{{The Theory and Practice of Estimating the Accuracy of Dynamic
  Flight-Determined Coefficients}}.
\newblock NASA Reference Publication 1077, 1981.

\bibitem[Mehra(1974)]{mehra1974ois}
R.~Mehra.
\newblock Optimal input signals for parameter estimation in dynamic
  systems---{S}urvey and new results.
\newblock \emph{IEEE Trans. Autom. Control}, 19\penalty0 (6):\penalty0
  753--768, 1974.

\bibitem[Murphy(1985)]{murphy1985aem}
P.C. Murphy.
\newblock \emph{{An Algorithm for Efficient Maximum Likelihood Estimation and
  Confidence Interval Determination in Nonlinear Estimation Problems}}.
\newblock PhD thesis, George Washington University, 1985.

\bibitem[Namdeo and Manohar(2007)]{namdeo2007nsd}
V.~Namdeo and C.S. Manohar.
\newblock Nonlinear structural dynamical system identification using adaptive
  particle filters.
\newblock \emph{J. Sound and Vib.}, 306\penalty0 (3--5):\penalty0 524--563,
  2007.

\bibitem[Nocedal and Wright(2006)]{nocedal2006no}
J.~Nocedal and S.J. Wright.
\newblock \emph{{Numerical Optimization}}.
\newblock Springer-Verlag New York, 2006.

\bibitem[Pirnay et~al.(2012)Pirnay, López-Negrete, and Biegler]{pirnay2012osb}
H.~Pirnay, R.~López-Negrete, and L.T. Biegler.
\newblock Optimal sensitivity based on ipopt.
\newblock \emph{Math. Program. Comput.}, 4\penalty0 (4):\penalty0 307--331,
  2012.

\bibitem[Ramsay et~al.(2007)Ramsay, Hooker, Campbell, and Cao]{ramsay2007ped}
J.O. Ramsay, G.~Hooker, D.~Campbell, and J.~Cao.
\newblock Parameter estimation for differential equations: a generalized
  smoothing approach.
\newblock \emph{J. R. Stat. Soc. B}, 69\penalty0 (5):\penalty0 741--796, 2007.

\bibitem[Ribeiro and Aguirre(2017)]{ribeiro2017smp}
A.H. Ribeiro and L.A. Aguirre.
\newblock Shooting methods for parameter estimation of output error models.
\newblock In \emph{Proc. 20th IFAC World Congress}, pages 13998--14003, 2017.

\bibitem[Ribeiro et~al.(2019)Ribeiro, Tiels, Umenberger, Schön, and
  Aguirre]{ribeiro2019sns}
A.H. Ribeiro, K.~Tiels, J.~Umenberger, T.B. Schön, and L.A. Aguirre.
\newblock On the smoothness of nonlinear system identification.
\newblock \emph{Automatica}, 2019.
\newblock Provisionally accepted,
  \href{https://arxiv.org/abs/1905.00820}{arXiv:~1905.00820}.

\bibitem[Robert and Casella(2010)]{robert2010imc}
C.~Robert and G.~Casella.
\newblock \emph{Introducing Monte Carlo Methods with R}.
\newblock Springer, 2010.

\bibitem[Robert(2015)]{robert2015mha}
C.P. Robert.
\newblock \emph{The Metropolis--Hastings Algorithm}.
\newblock Wiley, 2015.

\bibitem[Robert and Casella(2004)]{robert2004mcs}
C.P. Robert and G.~Casella.
\newblock \emph{Monte Carlo Statistical Methods}.
\newblock Springer, 2 edition, 2004.

\bibitem[Roberts et~al.(1997)Roberts, Gelman, and Gilks]{roberts1997wco}
G.O. Roberts, A.~Gelman, and W.R. Gilks.
\newblock Weak convergence and optimal scaling of random walk {Metropolis}
  algorithms.
\newblock \emph{The Annals of Applied Probability}, 7\penalty0 (1):\penalty0
  110--120, 1997.

\bibitem[Stoica and Söderström(1982)]{stoica1982ons}
P.~Stoica and T.~Söderström.
\newblock On non-singular information matrices and local identifiability.
\newblock \emph{Int. J. Control}, 36\penalty0 (2):\penalty0 323--329, 1982.

\bibitem[Söderström(1975)]{soderstrom1975coa}
T.~Söderström.
\newblock Comments on ``{O}rder assumption and singularity of information
  matrix for pulse transfer function models''.
\newblock \emph{IEEE Trans. Autom. Control}, 20\penalty0 (3):\penalty0
  445--447, 1975.

\bibitem[van~der Vaart(1998)]{vaart1998as}
A.W. van~der Vaart.
\newblock \emph{Asymptotic Statistics}, volume~3.
\newblock Cambridge University Press, 1998.

\bibitem[W\"{a}chter and Biegler(2006)]{wachter2006iip}
A.~W\"{a}chter and L.T. Biegler.
\newblock On the implementation of an interior-point filter line-search
  algorithm for large-scale nonlinear programming.
\newblock \emph{Math. Program.}, 106\penalty0 (1):\penalty0 25--57, 2006.

\bibitem[Wald(1949)]{wald1949ncm}
A.~Wald.
\newblock {Note on the Consistency of the Maximum Likelihood Estimate}.
\newblock \emph{Ann. Math. Statist.}, 20\penalty0 (4):\penalty0 595--601, 1949.

\bibitem[Wilks(1962)]{wilks1962ms}
S.S. Wilks.
\newblock \emph{Mathematical Statistics}.
\newblock John Wiley \& Sons, 1962.

\bibitem[Williams and Trivailo(2005)]{williams2005ope}
P.~Williams and P.~Trivailo.
\newblock Optimal parameter estimation of dynamical systems using direct
  transcription methods.
\newblock \emph{Inverse Problems in Science and Engineering}, 13\penalty0
  (4):\penalty0 377--409, 2005.

\end{thebibliography}

\end{document}